
\documentclass[letterpaper, 10 pt, conference]{ieeeconf}  

\IEEEoverridecommandlockouts                              
\overrideIEEEmargins

\usepackage{amsmath} 
\usepackage{amssymb}  
\usepackage{mathrsfs}
\usepackage{mathtools}
\usepackage{bm}
\usepackage{graphicx}
\usepackage{cite}
\usepackage{subfigure}

\newcommand{\figref}[1]{Fig.~\ref{#1}}

\newcommand{\lemref}[1]{Lemma~\ref{#1}}
\newcommand{\thmref}[1]{Theorem~\ref{#1}}
\newcommand{\defref}[1]{Definition~\ref{#1}}
\newcommand{\secref}[1]{Section~\ref{#1}}

\newcommand{\R}{\mathbb{R}}
\renewcommand{\S}{\mathcal{S}}
\newcommand{\K}{\mathcal{K}}
\newcommand{\M}{\mathcal{M}}
\newcommand{\C}{\mathcal{C}}
\newcommand{\KeyGen}{\mathsf{KeyGen}}
\newcommand{\Enc}{\mathsf{Enc}}
\newcommand{\Dec}{\mathsf{Dec}}
\newcommand{\pk}{\mathsf{pk}}
\newcommand{\sk}{\mathsf{sk}}
\newcommand{\A}{\mathcal{A}}
\newcommand{\B}{\mathcal{B}}
\renewcommand{\Game}{\mathsf{Game}}
\newcommand{\Adv}{\mathsf{Adv}}
\newcommand{\INDCPA}{\mathsf{IND-CPA}}
\newcommand{\INDPEA}{\mathsf{IND-PEA}}
\newcommand{\sample}{\overset{\$}{\gets}}

\newtheorem{definition}{Definition}

\newtheorem{lemma}{Lemma}
\newtheorem{theorem}{Theorem}

\newtheorem{remark}{Remark}

\title{\LARGE \bf
    Towards Provably Secure Encrypted Control Using \\ Homomorphic Encryption$^\ast$
}

\author{Kaoru Teranishi$^{1,2}$ and Kiminao Kogiso$^{1}$
\thanks{$^{\ast}$This work was supported by JSPS Grant-in-Aid for JSPS Fellows Grant Number JP21J22442 and for Scientific Research (B) Grant Number JP22H01509.}
\thanks{$^{1}$Department of Mechanical and Intelligent Systems Engineering,
        The University of Electro-Communications, 1-5-1 Chofugaoka, Chofu, Tokyo 1828585, Japan
        {\tt\small \{teranishi, kogiso\}@uec.ac.jp}}%
\thanks{$^{2}$Research Fellow of Japan Society for the Promotion of Science}%
}

\begin{document}

\thispagestyle{empty}
\hspace{-4.5mm}
\fbox{
\begin{minipage}{\textwidth-5mm}\scriptsize
© 20XX IEEE.  Personal use of this material is permitted.  Permission from IEEE must be obtained for all other uses, in any current or future media, including reprinting/republishing this material for advertising or promotional purposes, creating new collective works, for resale or redistribution to servers or lists, or reuse of any copyrighted component of this work in other works.
\end{minipage}
}
\newpage
\setcounter{page}{0}

\maketitle
\thispagestyle{empty}
\pagestyle{empty}

\begin{abstract}

Encrypted control is a promising method for the secure outsourcing of controller computation to a public cloud.
However, a feasible method for security proofs of control has not yet been developed in the field of encrypted control systems.
Additionally, cryptography does not consider certain types of attacks on encrypted control systems; therefore, the security of such a system cannot be guaranteed using a secure cryptosystem.
This study proposes a novel security definition for encrypted control under attack for control systems using cryptography.
It applies the concept of provable security, which is the security of cryptosystems based on mathematical proofs, to encrypted control systems.
Furthermore, this study analyzes the relation between the proposed security and the conventional security of cryptosystems.
The results of the analysis demonstrated that the security of an encrypted control system can be enhanced by employing secure homomorphic encryption.

\end{abstract}

\section{Introduction}\label{Introduction}

The encrypted control method is based on the homomorphism of cryptosystems that enables implementation of controller computation in an encrypted manner~\cite{Kogiso15,Farokhi17,Kim16,Darup21}.
The controller parameters and inputs/outputs of an encrypted controller are encrypted against attackers originating outside the closed-loop system.
This control method can be implemented for the secure outsourcing computation of controllers to a public cloud because the control inputs are computed without using a decryption key.

Although encrypted control is a promising method for secure control, research on the security of encrypted control systems remains limited.
In most of the previous studies, an encrypted control system was assumed to be secure if the underlying cryptosystem was secure.
However, certain attacks against encrypted control systems are not considered in cryptography.
\figref{fig:net_enc} depicts a typical situation in standard signal encryption.
Alice encrypts a message and transmits the encrypted message to Bob, who decrypts it.
Eve, an attacker, intercepts the encrypted data and tries to recover it.
Eve eavesdrops on the input and output of the encrypted controller while attacking the encrypted control system, as shown in \figref{fig:enc_ctrl}.
This is because control systems are typically closed-loop systems.
In this case, Eve's objective is not necessary to obtain the original controller input and output.
For example, Eve may attempt to recover an encrypted controller parameter from encrypted messages.
This objective may be achievable even if Eve fails to decrypt the encrypted input and output.
Although this type of attack on encrypted control systems is fairly common, it is not considered in standard cryptosystems, including the ElGamal encryption~\cite{ElGamal85} and Paillier encryption~\cite{Paillier99}, which are widely used for encrypted control.
Therefore, using a secure cryptosystem does not necessarily mean that an encrypted control system is secure.

\begin{figure}[t]
    \centering
    \subfigure[Eavesdropping in encrypted network.]{\includegraphics[scale=.85]{./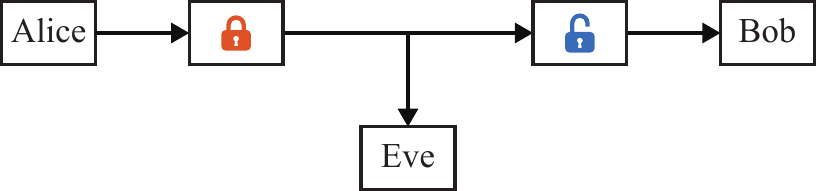}\label{fig:net_enc}}
    \subfigure[Eavesdropping in encrypted control.]{\includegraphics[scale=.85]{./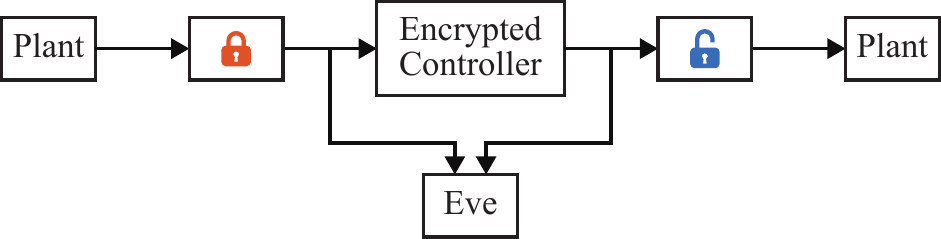}\label{fig:enc_ctrl}}
    \caption{Eavesdropping in encrypted network and encrypted control.}
    \label{fig:eavesdropping}
    \vspace{-5mm}
\end{figure}

Furthermore, the definition of security for encrypted control systems is ambiguous, and the types of attacks against the systems are not clearly defined.
Meanwhile, the security of cryptosystems has been mathematically defined and analyzed in the field of cryptography~\cite{Katz21}.
One of the security approaches, called \textit{provable security}, is formulated based on the objectives and capabilities of the attackers.
Additionally, it is proved under some assumptions of computational complexity.
More precisely, the security is usually demonstrated using proof by contradiction, namely if there exists an algorithm to break a cryptosystem, then a practically difficult computational problem can be efficiently solved by using the algorithm.
This proof construction is called the reductionist approach and is fundamental in proving the security of modern cryptosystems.
The security of encrypted control systems can be clarified and formally analyzed by applying a similar security definition and proof construction.
Consequently, the standard security definition of cryptosystems must be expanded to encompass the specific attacks for encrypted control systems because the attack scenarios considered in cryptography and control systems vary significantly.
Some studies have already applied cryptographic security to encrypted control protocols.
The privacy of cloud-based optimization, model predictive control, and state estimation is achieved through the computational indistinguishability of two random ensembles~\cite{Alexandru21_1,Alexandru20_3,Ristic21}, and that of distributed and cooperative controls is achieved based on a cryptographic game~\cite{Alexandru20_1,Alexandru19_3}.
However, the guaranteed privacy of the aforementioned studies is based on the input/output privacy of the systems rather than a controller parameter and is provided for specific protocols.

This study proposes a novel security definition of encrypted control systems under a parameter estimation attack via a game based on a cryptographic security notion.
In a parameter estimation attack, an attacker attempts to disclose the original controller parameter of an encrypted controller from multiple data of encrypted input and output.
The proposed security is defined as an attacker obtaining no information about a controller parameter by the attack.
Additionally, we analyze the relation of the proposed security to the standard security of cryptosystems.
Contrary to the existing studies~\cite{Alexandru21_1,Alexandru20_3,Ristic21,Alexandru20_1,Alexandru19_3}, this study aims to expand the application of security notion in cryptography to encrypted control systems.
Moreover, it contributes to the establishment of a methodology of security proofs for general encrypted control.
The proposed security definition allows for the analysis of the security of a broader encrypted control system under a parameter estimation attack.
The analysis of this study clarifies the strength of the proposed security.
It is shown that the proposed security is weaker than or equal to the standard security of cryptosystems.
This implies that a secure encrypted control system under the attack can be realized using a secure cryptosystem that satisfies the standard cryptographic security.
In addition, this study discusses a condition in which the proposed security is as secure as the standard cryptosystem security.
The condition suggests that the proposed security for most controls may be achievable by a simple and lightweight scheme rather than a cryptosystem that satisfies the standard security.

The remainder of this paper is organized as follows.
\secref{sec:preliminaries} introduces homomorphic encryption and encrypted control employed in this study.
In addition, the concept of provable security in cryptography is provided.
\secref{sec:provable_security} proposes a novel security definition for encrypted control and presents the analysis results.
\secref{sec:further_development} presents the scope for further development of the proposed security definition.
\secref{sec:conclusions} presents the conclusions and future work.

\section{Preliminaries}\label{sec:preliminaries}

\subsection{Notation}

The set of real numbers is denoted by $\R$.
The sets of $n$-dimensional column-vectors and $m$-by-$n$ matrices of which elements belong to a set $X$ are denoted by $X^{n}$ and $X^{m\times n}$, respectively.
An algorithm $\A$ is polynomial-time algorithm if there exists a polynomial $p$ such that, for every $k$-bit length input $x$, $\A(x)$ terminates within at most $p(k)$ steps~\cite{Katz21}.
We say a function $\epsilon:\{1,2,3,\dots\}\to\R$ is negligible if, for every positive integer $c>0$, there exists an integer $N$ such that $|\epsilon(n)|<n^{-c}$ holds for all $n>N$~\cite{Katz21}.

\subsection{Homomorphic encryption}

This section summarizes the basics of homomorphic encryption.
Homomorphic encryption is a cryptosystem that enables the application of arithmetic over a ciphertext space.
The definition of a cryptosystem is as follows.

\begin{definition}[Cryptosystem~\cite{Buchmann01}]\label{def:cryptosystem}
    A cryptosystem is defined as a tuple $\Pi=(\KeyGen,\Enc,\Dec)$, which satisfies the following properties:
    \begin{itemize}
        \item $\KeyGen:\S\to\K$ is a key-generation function that determines keys $(\pk,\sk)\in\K$ from a security parameter $\lambda\in\S$, where $\K$ is a key space, and $\S$ is a set of security parameters.
        \item $\Enc_{\pk}:\M\to\C$ is an encryption function that outputs a ciphertext $c\in\C$ from a plaintext $m\in\M$, where $\M$ is a plaintext space, and $\C$ is a ciphertext space.
        \item $\Dec_{\sk}:\C\to\M$ is a decryption function that outputs a plaintext $m\in\M$ from a ciphertext $c\in\C$.
        \item For any $\pk$, there exists $\sk$ such that $\Dec_{\sk}(\Enc_{\pk}(m))=m$ for all $m\in\M$.
    \end{itemize}
    Further, we often say $\Pi$ is symmetric-key encryption if $\pk=\sk$; otherwise it is public-key encryption.
\end{definition}

Homomorphic encryption is defined as follows.

\begin{definition}[Homomorphic encryption]\label{def:he}
    A cryptosystem $\Pi$ in \defref{def:cryptosystem} is called homomorphic encryption if there exist binary operations $\bullet:\M\times\M\to\M$ and $\circ:\C\times\C\to\C$ such that $\Dec_{\sk}(\Enc_{\pk}(m)\circ\Enc_{\pk}(m'))=m\bullet m'$ for all $m,m'\in\M$.
\end{definition}

Homomorphic encryption is classified based on the allowed arithmetic, into additive, multiplicative, somewhat, and fully homomorphic encryption.
The additive and multiplicative homomorphic encryption satisfy $\Dec_{\sk}(\Enc_{\pk}(m)\oplus\Enc_{\pk}(m'))=m+m'$ and $\Dec_{\sk}(\Enc_{\pk}(m)\otimes\Enc_{\pk}(m'))=mm'$, respectively.
Somewhat homomorphic encryption can perform the arithmetic $\oplus$ and $\otimes$ only a limited number of times.
Fully homomorphic encryption allows the computation of any number of arbitrary arithmetic.

\subsection{Encrypted control}

Various encrypted control frameworks have been proposed in conventional works according to the controller type, such as static output feedback~\cite{Farokhi17}, cooperative control~\cite{Darup19_1,Alexandru19_3}, and model predictive control~\cite{Darup18_2,Alexandru18}.
However, this study does not assume a specific type of controller to be encrypted.
A controller is considered as a map from a plant output to the control input to analyze the cryptographic security for as broad a class of encrypted controllers as possible.

\begin{definition}[Encrypted control]
    Given a control law
    \begin{equation}
        \bm{u}=f(\bm{y};\bm{K}),
        \label{eq:control}
    \end{equation}
    where $\bm{u}\in\M^{q}$ is a control input, $\bm{y}\in\M^{\ell}$ is a plant output, and $\bm{K}\in\M^{r}$ is a controller parameter.
    Let $\Pi$ be homomorphic encryption in \defref{def:he}.
    We say a map $f_{\Pi}$ is an encrypted control law of $f$ with $\Pi$ if \[
        \Dec_{\sk}(f_{\Pi}(\Enc_{\pk}(\bm{y});\Enc_{\pk}(\bm{K}))=f(\bm{y};\bm{K})
    \]
    holds, where for a plaintext vector $\bm{m}=[m_{1}\ \cdots\ m_{k}]^{\top}\in\M^{k}$ and ciphertext vector $\bm{c}=[c_{1}\ \cdots\ c_{n}]^{\top}\in\C^{n}$ the encryption and decryption functions perform each element of the vectors, namely $\Enc_{\pk}(\bm{m})=[\Enc_{\pk}(m_{1}) \ \cdots \ \Enc_{\pk}(m_{k})]^{\top}$ and $\Dec_{\sk}(\bm{c})=[\Dec_{\sk}(c_{1}) \ \cdots \ \Dec_{\sk}(c_{n})]^{\top}$.
\end{definition}

\begin{remark}
    In practice, $\bm{y}$ and $\bm{K}$ are given by real-valued vectors, and $\M$ is not necessarily the same as $\R$.
    Therefore, the elements of $\bm{y}$ and $\bm{K}$ must be encoded into plaintext before encryption~\cite{Farokhi17,Teranishi19_3,Darup20_2}.
    This encoding process is omitted in this study because it is focused on the cryptographic properties of the encrypted controls.
\end{remark}

\subsection{Provable Security}

The security of modern cryptosystems is demonstrated through mathematical proofs using the model of an attacker, and the security proved by such formal procedures is called \textit{provable security}.
A popular method to define the security is a \textit{game} between an attacker and a challenger.
The game simulates a situation in which an attacker attempts to break a cryptosystem using an \textit{oracle}, which is a black box that outputs an ideal response for an input.
The oracle represents the capabilities of the attacker.
It should be noted that specific attack methodologies for cryptosystems, such as brute-force and side-channel attacks, are not considered in provable security.
In the framework of provable security, attackers are considered as probabilistic polynomial-time algorithms.

This section introduces \textit{indistinguishability under chosen plaintext attack (IND-CPA)}, which is a traditional security concept for public-key encryption, as an example of provable security.

\begin{definition}[IND-CPA~\cite{Katz21}]\label{def:ind-cpa}
    Let $\Pi\!=\!(\KeyGen,\Enc,\Dec)$ be public-key encryption.
    Define a game between an attacker $\A=(\A_{1},\A_{2})$ and a challenger as follows.
    \begin{center}
        \fbox{
            \begin{minipage}{.9\columnwidth}
                $\Game^{\INDCPA}_{\Pi,\A}$:
                \begin{enumerate}
                    \renewcommand{\labelenumi}{\arabic{enumi}.}
                    \item $(\pk,\sk)\gets\KeyGen(\lambda)$
                    \item $(m_{0},m_{1},\sigma)\gets\A_{1}(\pk)$
                    \item $b\sample\{0,1\}$
                    \item $c\gets\Enc_{\pk}(m_{b})$
                    \item $\hat{b}\gets\A_{2}(c,\sigma)$
                \end{enumerate}
            \end{minipage}
        }
    \end{center}
    \begin{itemize}
        \item\textbf{Setup}:
        The challenger computes keys $\pk$ and $\sk$ from a security parameter $\lambda$ using the key-generation function $\KeyGen$ and gives $\pk$ to the attacker (Line~$1$).
        \item\textbf{Challenge}:
        The attacker computes two different plaintexts $m_{0},m_{1}\in\M$, which are of identical size, and an intermediate state $\sigma$ using the polynomial-time algorithm $\A_{1}$ (Line~$2$).
        The attacker gives $m_{0}$ and $m_{1}$ to the challenger.
        The challenger randomly selects a bit $b\in\{0,1\}$ and computes a ciphertext $c$ of $m_{b}$ using the encryption function $\Enc$ with $\pk$ (Lines~$3$--$4$).
        The challenger gives $c$ to the attacker.
        \item\textbf{Guess}:
        The attacker guesses the plaintext that has been encrypted from the given ciphertext $c$ using the polynomial-time algorithm $\A_{2}$ and outputs a bit $\hat{b}\in\{0,1\}$ (Line~$5$).
        If $\hat{b}=b$, the attacker wins the game; otherwise, the challenger wins.
    \end{itemize}
    The game is called the IND-CPA game.
    
    Define the advantage of the attacker in the IND-CPA game as $\Adv^{\INDCPA}_{\Pi,\A}\coloneqq\left|\Pr\left[\hat{b}=b\mid\Game^{\INDCPA}_{\Pi,\A}\right]-1/2\right|$.
    We say $\Pi$ is IND-CPA secure, or $\Pi$ satisfies the IND-CPA security if $\Adv^{\INDCPA}_{\Pi,\A}$ is negligible, that is, there exists a negligible function $\epsilon$ such that $\Adv^{\INDCPA}_{\Pi,\A}\le\epsilon$.
\end{definition}

\begin{remark}
    Public-key encryption must at least satisfy the IND-CPA security to be used for secure communication.
    If the advantage of the attacker is not negligible, he/she can guess the encrypted message with a significant probability.
\end{remark}

\section{Provable Security for Encrypted Control}\label{sec:provable_security}

This section proposes a novel security notion of encrypted control based on the traditional provable-security notion in cryptography and reveals the relationship between the notions.

\subsection{Indistinguishability against parameter estimation attack}

This study considers an attack scenario in which an attacker attempts to estimate the controller parameters from the ciphertexts of the controller inputs and outputs.
The indistinguishability for encrypted control against the attack can be defined as follows.

\begin{definition}[IND-PEA]\label{def:ind-pea}
    Let $\Pi=(\KeyGen,\Enc,\Dec)$ be homomorphic encryption, and let $f_{\Pi}$ be an encrypted control law of \eqref{eq:control} with $\Pi$.
    Define a game between an attacker $\B=(\B_{1},\B_{2})$ and a challenger as follows.
    \begin{center}
        \fbox{
            \begin{minipage}{.9\columnwidth}
                $\Game^{\INDPEA}_{\Pi,f,\B}$:
                \begin{enumerate}
                    \renewcommand{\labelenumi}{\arabic{enumi}.}
                    \item $(\pk,\sk)\gets\KeyGen(\lambda)$
                    \item $(\bm{K}_{0},\bm{K}_{1},\sigma)\gets\B_{1}(\pk)$
                    \item $b\sample\{0,1\}$
                    \item $\bm{c_{K}}\gets\Enc_{\pk}(\bm{K}_{b})$
                    \item $\hat{b}\gets\B_{2}^{\mathcal{O}}(\sigma)$
                \end{enumerate}
            \end{minipage}
        }
    \end{center}
    \begin{itemize}
        \item\textbf{Setup}:
        The challenger computes keys $\pk$ and $\sk$ from a security parameter $\lambda$ using the key-generation function $\KeyGen$ and gives $\pk$ to the attacker (Line~$1$).
        \item\textbf{Challenge}:
        The attacker computes two different parameters $\bm{K}_{0},\bm{K}_{1}\in\M^{r}$, which are of identical size, and an intermediate state $\sigma$ using the polynomial-time algorithm $\B_{1}$ (Line~$2$).
        The attacker gives $\bm{K}_{0}$ and $\bm{K}_{1}$ to the challenger.
        The challenger randomly selects a bit $b\in\{0,1\}$ and computes a ciphertext $\bm{c_{K}}$ of $\bm{K}_{b}$ using the encryption function $\Enc$ with $\pk$ (Lines~$3$--$4$).
        The challenger sets $\bm{c_{K}}$ to the encrypted controller.
        \item\textbf{Guess}:
        The attacker guesses the selected parameter using the polynomial-time algorithm $\B_{2}$ and by querying an encrypted control oracle $\mathcal{O}$ polynomial times of $\lambda$ and outputs a bit $\hat{b}\in\{0,1\}$ (Line~$5$).
        The oracle receives a query $\bm{c_{y}}\in\C^{\ell}$ and returns an output $\bm{c_{u}}=f_{\Pi}(\bm{c_{y}};\bm{c_{K}})$.
        If $\hat{b}=b$, the attacker wins the game; otherwise, the challenger wins.
    \end{itemize}
    The game is called the \textit{indistinguishability under parameter estimation attack (IND-PEA)} game.
    
    Define the advantage of the attacker in the IND-PEA game as $\Adv^{\INDPEA}_{\Pi,f,\B}\coloneqq\left|\Pr\left[\hat{b}=b\mid\Game^{\INDPEA}_{\Pi,f,\B}\right]-1/2\right|$.
    We say $f_{\Pi}$ is IND-PEA secure, or $f_{\Pi}$ satisfies the IND-PEA security if $\Adv^{\INDPEA}_{\Pi,f,\B}$ is negligible.
\end{definition}

\figref{fig:ind_game} presents the schematic diagrams of the IND-CPA and IND-PEA games.
In the IND-CPA game of \figref{fig:ind-cpa}, an attacker guesses whether the plaintext, $m_{0}$ or $m_{1}$, has been encrypted from a given ciphertext $c$ of $m_{b}$.
Note that an attacker in this game can obtain the ciphertext of the target plaintext.
Conversely, an attacker in the IND-PEA game of \figref{fig:ind-pea} cannot obtain a ciphertext of the target parameter, $\bm{K}_{b}$.
An attacker can use the encrypted control oracle $\mathcal{O}$ with the encrypted parameter $\bm{c_{K}}=\Enc_{\pk}(\bm{K}_{b})$ to collect the encrypted controller outputs $\bm{c_{y}}$ for any input $\bm{c_{u}}$.
Therefore, the IND-PEA game simulates different situations within the IND-CPA game.

It should be noted that the oracle to be used by an attacker indicates that encrypted control is provided as a cloud service.
Most studies conducted on encrypted control consider cloud-based control systems~\cite{Nils20,Darup18_2,Alexandru18,Alexandru21_1,Teranishi21_4,Suh21_2}, and in such cases, an attacker can freely use an encrypted control algorithm.
Therefore, the use of the oracle in the IND-PEA game is reasonable when considering the security for the encrypted control.

\begin{figure}[t]
    \centering
    \subfigure[IND-CPA game.]{\includegraphics[scale=1]{./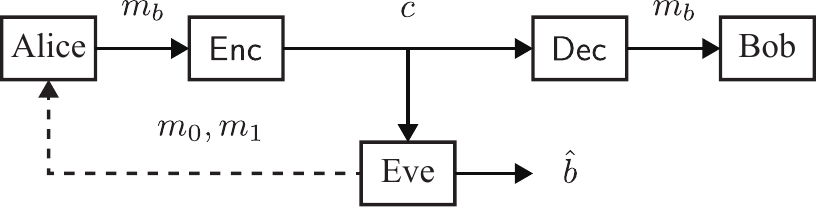}\label{fig:ind-cpa}}
    \subfigure[IND-PEA game.]{\includegraphics[scale=1]{./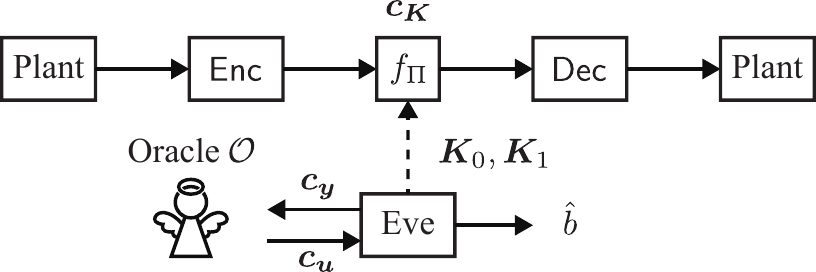}\label{fig:ind-pea}}
    \caption{Schematic pictures of IND-CPA and IND-PEA games.}
    \label{fig:ind_game}
    \vspace{-5mm}
\end{figure}

\subsection{Relationship between the security notions}

The security of encrypted control can be demonstrated using mathematical proofs based on the definition of IND-PEA security.
Furthermore, it gives the other benefit of discussing how much security strength is required to achieve secure encrypted control.
The following theorems disclose the relation between the IND-PEA security and IND-CPA security and clarify the security strength of the IND-PEA security.

\begin{theorem}\label{thm:weaker}
    Let $\Pi$ be homomorphic encryption and let $f_{\Pi}$ be an encrypted control law of \eqref{eq:control} with $\Pi$.
    If $\Pi$ is IND-CPA secure, $f_{\Pi}$ is IND-PEA secure.
\end{theorem}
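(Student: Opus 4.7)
The plan is to prove the contrapositive by a reduction: assuming a probabilistic polynomial-time attacker $\B=(\B_{1},\B_{2})$ achieves non-negligible advantage in the IND-PEA game, I would construct a probabilistic polynomial-time attacker $\A=(\A_{1},\A_{2})$ against the IND-CPA security of $\Pi$ using $\B$ as a black box. The challenger in the IND-CPA game only exchanges a single plaintext pair, whereas the IND-PEA game concerns an $r$-dimensional parameter vector, so the reduction cannot be a one-shot embedding of $\bm{K}_{b}$ into $m_{b}$. The natural tool is a standard hybrid argument over the $r$ coordinates.

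Concretely, $\A_{1}$ receives $\pk$ and simulates Line~$2$ of the IND-PEA game by running $\B_{1}(\pk)$ to obtain $(\bm{K}_{0},\bm{K}_{1},\sigma)$. It then samples an index $i\sample\{1,\dots,r\}$, outputs $(m_{0},m_{1})=(K_{0,i},K_{1,i})$ together with an extended state $\sigma'=(\sigma,\bm{K}_{0},\bm{K}_{1},i)$. After receiving the challenge ciphertext $c$, $\A_{2}$ assembles a simulated encrypted parameter $\bm{c_{K}}$ by encrypting the first $i-1$ coordinates from $\bm{K}_{1}$, inserting $c$ as the $i$-th coordinate, and encrypting the remaining $r-i$ coordinates from $\bm{K}_{0}$ (all using $\pk$). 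It then invokes $\B_{2}^{\mathcal{O}}(\sigma)$, simulating the encrypted control oracle $\mathcal{O}$ on each query $\bm{c_{y}}$ by computing $f_{\Pi}(\bm{c_{y}};\bm{c_{K}})$ directly, since $f_{\Pi}$ is public and $\bm{c_{K}}$ is known to $\A_{2}$. Finally, $\A_{2}$ outputs whatever bit $\B_{2}$ outputs.

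The analysis proceeds through the hybrid distributions $H_{0},\dots,H_{r}$, where $H_{j}=\Enc_{\pk}(K_{1,1},\dots,K_{1,j},K_{0,j+1},\dots,K_{0,r})$. Running $\B$ against $H_{0}$ or $H_{r}$ is distributionally identical to running $\B$ in the real IND-PEA game with $b=0$ or $b=1$, so by assumption $\B$ separates $H_{0}$ from $H_{r}$ with non-negligible bias. A standard triangle-inequality/pigeonhole step then yields an index $i$ for which $\B$ distinguishes $H_{i-1}$ from $H_{i}$ with advantage at least $\Adv^{\INDPEA}_{\Pi,f,\B}/r$; these two hybrids differ only at coordinate $i$, which is precisely where $\A$ embedded the IND-CPA challenge. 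Hence $\Adv^{\INDCPA}_{\Pi,\A}\ge\Adv^{\INDPEA}_{\Pi,f,\B}/r$, which is non-negligible whenever $r$ is polynomial in $\lambda$, contradicting IND-CPA security of $\Pi$.

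The main obstacle I expect is not the algebra but the bookkeeping of the reduction: I must verify that $\A$ runs in probabilistic polynomial time (which relies on $\KeyGen$, $\Enc$, $f_{\Pi}$, and $\B$ all being polynomial-time, and on $r$ being polynomial in $\lambda$), and that the simulated oracle $\mathcal{O}$ is perfectly indistinguishable from the real one in every hybrid, so that no distinguishing advantage leaks into $\B$'s view from the simulation rather than from the challenge bit. If $r$ could grow super-polynomially with $\lambda$, the $1/r$ loss in the hybrid reduction would destroy the argument, so this dimensional assumption is the subtle hypothesis worth flagging.
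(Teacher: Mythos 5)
Your proof is correct and, in fact, more complete than the paper's. Both arguments prove the contrapositive by reduction, but the paper's reduction is a one-shot embedding: its $\A_{1}$ simply forwards $\B_{1}$'s output as the IND-CPA plaintext pair, its $\A_{2}$ runs $\B_{2}^{\mathcal{O}}$, and its Lemma~2 explicitly identifies the resulting game with the IND-PEA game \emph{with $r=1$}, leaving the case of an $r$-dimensional parameter vector unaddressed (the IND-CPA challenger encrypts only a single plaintext, so one challenge ciphertext cannot stand in for all $r$ coordinates). Your hybrid argument over the coordinates is exactly the ingredient needed for general $r$: it buys the full statement at the cost of a $1/r$ loss in advantage, which is harmless precisely under the dimensional hypothesis you rightly flag ($r$ polynomial in $\lambda$). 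You also make explicit a point the paper glosses over, namely that $\A_{2}$ must simulate the encrypted control oracle itself; this works because $f_{\Pi}$ is a public algorithm and $\A_{2}$ holds the entire simulated ciphertext vector $\bm{c_{K}}$, so the simulation is perfect in every hybrid. The only remaining bookkeeping is the usual care with absolute-value advantages in the telescoping step (equivalently, averaging over the uniformly random hybrid index $i$ as in your construction), which is standard.
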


\begin{proof}
    It is proved that if $f_{\Pi}$ is not IND-PEA secure, $\Pi$ is not IND-CPA secure.
    Define polynomial-time algorithms $\A_{1}$ and $\A_{2}$ as follows.
    \begin{center}
        \fbox{
            \begin{minipage}{.9\columnwidth}
                $\A_{1}(\pk)$:
                \begin{enumerate}
                    \renewcommand{\labelenumi}{\arabic{enumi}.}
                    \item $(m_{0},m_{1},\sigma)\gets\B_{1}(\pk)$
                    \item \textbf{return} $(m_{0},m_{1},\sigma)$
                \end{enumerate}
            \end{minipage}
        }
    \end{center}
    \begin{center}
        \fbox{
            \begin{minipage}{.9\columnwidth}
                $\A_{2}(c,\sigma)$:
                \begin{enumerate}
                    \renewcommand{\labelenumi}{\arabic{enumi}.}
                    \item \textbf{return} $\B_{2}^{\mathcal{O}}(\sigma)$
                \end{enumerate}
            \end{minipage}
        }
    \end{center}
    Substituting the algorithms into $\Game^{\INDCPA}_{\Pi,\A}$ in \defref{def:ind-cpa}, we obtain the following game.
    \begin{center}
        \fbox{
            \begin{minipage}{.9\columnwidth}
                $\Game_{\Pi,f,\B}$:
                \begin{enumerate}
                    \renewcommand{\labelenumi}{\arabic{enumi}.}
                    \item $(\pk,\sk)\gets\KeyGen(\lambda)$
                    \item $(m_{0},m_{1},\sigma)\gets\B_{1}(\pk)$
                    \item $b\sample\{0,1\}$
                    \item $c\gets\Enc_{\pk}(m_{b})$
                    \item $\hat{b}\gets\B_{2}^{\mathcal{O}}(\sigma)$
                \end{enumerate}
            \end{minipage}
        }
    \end{center}
    \begin{itemize}
        \item\textbf{Setup}:
        This phase is identical to the setup phase in \defref{def:ind-cpa}.
        \item\textbf{Challenge}:
        The attacker computes two different parameters $m_{0},m_{1}\in\M$, which are of identical size, and an intermediate state $\sigma$ using the polynomial-time algorithm $\B_{1}$ (Line~$2$).
        The attacker gives $m_{0}$ and $m_{1}$ to the challenger.
        The challenger randomly selects a bit $b\in\{0,1\}$ and computes a ciphertext $c$ of $m_{b}$ using the encryption function $\Enc$ with $\pk$ (Lines~$3$--$4$).
        The challenger sets $c$ to the encrypted controller.
        \item\textbf{Guess}:
        This phase is identical to the guess phase in \defref{def:ind-pea}.
    \end{itemize}
    Define the advantage of the attacker in the game $\Game_{\Pi,f,\B}$ as $\Adv_{\Pi,f,\B}\coloneqq\left|\Pr\left[\hat{b}=b\mid\Game_{\Pi,f,\B}\right]-1/2\right|$.
    Here, we consider the following lemmas to prove the theorem.
    
    \begin{lemma}\label{lem:1}
        $\left|\Adv^{\INDCPA}_{\Pi,\A}-\Adv_{\Pi,f,\B}\right|$ is negligible.
    \end{lemma}
    
    \begin{proof}
        The substitution to obtain the game $\Game_{\Pi,f,\B}$ does not change the probability of the game between the challenger and the attacker.
    \end{proof}

    \begin{lemma}\label{lem:2}
        $\left|\Adv_{\Pi,f,\B}-\Adv^{\INDPEA}_{\Pi,f,\B}\right|$ is negligible.
    \end{lemma}
    
    \begin{proof}
        The game $\Game_{\Pi,f,\B}$ corresponds to the IND-PEA game $\Game^{\INDPEA}_{\Pi,f,\B}$ with $r=1$, which is the dimension of the controller parameter.
        The advantage of the attacker does not change between these games.
    \end{proof}
    
    From \lemref{lem:1} and \lemref{lem:2}, it follows that the difference of advantages, $\left|\Adv^{\INDCPA}_{\Pi,\A}-\Adv^{\INDPEA}_{\Pi,f,\B}\right|$, is negligible.
    Furthermore, $\Adv^{\INDPEA}_{\Pi,f,\B}$ is not negligible because $f_{\Pi}$ is not IND-PEA secure, based on the assumption of this proof.
    Therefore, $\Adv^{\INDCPA}_{\Pi,\A}$ is not negligible, and so $\Pi$ is not IND-CPA secure.
\end{proof}

The theorem demonstrates that the IND-PEA security is weaker than or equal to the IND-CPA security.
This implies that secure encrypted control against the parameter estimation attack is achievable by using secure homomorphic encryption.
The following theorem presents one of the conditions so that the IND-PEA security is equivalent to the IND-CPA security.

\begin{theorem}\label{thm:equality}
    Let $\Pi$ be homomorphic encryption and let $f_{\Pi}$ be an encrypted control law of \eqref{eq:control} with $\Pi$.
    Suppose $f$ in \eqref{eq:control} is bijective for a fixed $\bm{y}\in\M^{\ell}$. 
    If $f_{\Pi}$ is IND-PEA secure and if the attacker can compute an encrypted control law of $f^{-1}$ with $\Pi$, then $\Pi$ is IND-CPA secure.
\end{theorem}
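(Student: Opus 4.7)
The plan is to establish the contrapositive by reduction: assume $\Pi$ is \emph{not} IND-CPA secure, witnessed by a polynomial-time adversary $\A=(\A_{1},\A_{2})$ with non-negligible advantage, and use it to construct a polynomial-time IND-PEA adversary $\B=(\B_{1},\B_{2})$ against $f_{\Pi}$ of essentially the same advantage. As in \lemref{lem:2} of the preceding proof, the scalar plaintext space $\M$ is identified with $\M^{r}$ for $r=1$, so the two challenge plaintexts $m_{0},m_{1}$ produced by $\A_{1}$ can be recast as the two parameter challenges $\bm{K}_{0},\bm{K}_{1}$ of the IND-PEA game.

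First I would define $\B_{1}(\pk)$ to simply invoke $\A_{1}(\pk)$, rename its output $(m_{0},m_{1},\sigma_{\A})$ as $(\bm{K}_{0},\bm{K}_{1},\sigma)$ (absorbing $\sigma_{\A}$ into $\sigma$), and return it. The IND-PEA challenger then selects a bit $b$ and installs $\bm{c_{K}}\gets\Enc_{\pk}(\bm{K}_{b})$ inside the oracle. The crucial step is designing $\B_{2}^{\mathcal{O}}(\sigma)$, which must supply $\A_{2}$ with a challenge ciphertext $c$ decrypting to $\bm{K}_{b}$ despite never seeing $\bm{c_{K}}$ directly. To do this, $\B_{2}$ picks any $\bm{y}\in\M^{\ell}$, forms $\bm{c_{y}}\gets\Enc_{\pk}(\bm{y})$, queries the oracle once to obtain $\bm{c_{u}}=f_{\Pi}(\bm{c_{y}};\bm{c_{K}})$ which decrypts to $f(\bm{y};\bm{K}_{b})$, and then applies the encrypted inverse control law $f^{-1}_{\Pi}$ (available to $\B$ by the standing hypothesis) to the pair $(\bm{c_{u}},\bm{c_{y}})$ to produce a ciphertext $c$ satisfying $\Dec_{\sk}(c)=\bm{K}_{b}$. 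Bijectivity of $f(\bm{y};\cdot)$ for fixed $\bm{y}$ is exactly what makes $f^{-1}_{\Pi}$ well-defined and guarantees that $c$ encrypts the correct plaintext. Finally $\B_{2}$ returns $\A_{2}(c,\sigma_{\A})$.

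The analysis mirrors the two-lemma structure used for \thmref{thm:weaker}: one small lemma argues that substituting these wrapper algorithms does not alter the success probability from $\A$'s viewpoint, and a second argues that the ciphertext $c$ produced through the oracle/$f^{-1}_{\Pi}$ chain functions as a legitimate IND-CPA challenge. Chaining them yields $\bigl|\Adv^{\INDCPA}_{\Pi,\A}-\Adv^{\INDPEA}_{\Pi,f,\B}\bigr|$ negligible, so a non-negligible IND-CPA advantage of $\A$ translates into a non-negligible IND-PEA advantage of $\B$, contradicting the hypothesis that $f_{\Pi}$ is IND-PEA secure.

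The main obstacle I anticipate is precisely the second lemma: the ciphertext obtained by composing an oracle response with $f^{-1}_{\Pi}$ is in general not distributed identically to a fresh $\Enc_{\pk}(\bm{K}_{b})$, whereas $\A_{2}$ is only formally promised to work against freshly generated challenge ciphertexts. The natural remedies are either to assume $\Pi$ supports ciphertext re-randomization, which $\B$ applies before invoking $\A_{2}$, or to invoke the fact that the defining equation of an encrypted control law places its output in $\C$ so that it is indistinguishable from a fresh encryption. I would flag explicitly in the write-up which of these assumptions is being relied upon, since pretending the synthetic ciphertext is literally a fresh encryption is the one subtlety that could invalidate the reduction.
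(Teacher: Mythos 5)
Your reduction is essentially the paper's: prove the contrapositive by wrapping the IND-CPA adversary $\A$ inside an IND-PEA adversary $\B$ that makes a single oracle query at a fixed $\bm{y}$ and applies $f^{-1}_{\Pi}$ to synthesize the challenge ciphertext handed to $\A_{2}$, then chain game-hopping lemmas. Two points of divergence are worth recording. First, your identification of $\M$ with $\M^{r}$ ``for $r=1$'' does not cover the general case: the theorem fixes a control law with parameter space $\M^{r}$, and the IND-PEA challenger expects challenges in $\M^{r}$. The paper handles this by replicating the scalar outputs of $\A_{1}$ into constant vectors $\bm{K}_{i}=[K_{i}\ \cdots\ K_{i}]^{\top}$ and extracting the first coordinate of $\hat{\bm{c}}_{\bm{K}}$ before invoking $\A_{2}$; you should adopt the same device rather than restricting to $r=1$. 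Second, the distributional subtlety you flag --- that the ciphertext recovered through the oracle and $f^{-1}_{\Pi}$ need not look like a fresh $\Enc_{\pk}(K_{b})$ --- is precisely the crux of \lemref{lem:4}. The paper resolves it by asserting the ciphertext-level identity \eqref{eq:ecinv}, i.e.\ $f^{-1}_{\Pi}(\bm{c_{y}};f_{\Pi}(\bm{c_{y}};\bm{c_{K}}))=\bm{c_{K}}$ exactly, not merely up to decryption; this is strictly stronger than what the definition of an encrypted control law guarantees, so your instinct to state explicitly which assumption is being used (exact ciphertext-level inversion, or re-randomizability of $\Pi$) is sound and arguably more careful than the paper's own treatment.
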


\begin{proof}
    Under the assumptions, we prove that if $\Pi$ is not IND-CPA secure, $f_{\Pi}$ is not IND-PEA secure.
    Let $f^{-1}_{\Pi}$ be an encrypted control law of $f^{-1}$ with $\Pi$.
    Since $f$ in \eqref{eq:control} is bijective for some fixed $\bm{y}\in\M^{\ell}$, it follows that

    \begin{equation}
        f^{-1}(\bm{y};f(\bm{y};\bm{K}))=\bm{K}
        \label{eq:finv}
    \end{equation}
    for all $\bm{K}\in\M^{r}$.
    Moreover, 
    \begin{equation}
        f^{-1}_{\Pi}(\bm{c_{y}};f_{\Pi}(\bm{c_{y}};\bm{c_{K}}))=\bm{c_{K}}
        \label{eq:ecinv}
    \end{equation}
    holds for $\bm{c_{y}}=\Enc_{\pk}(\bm{y})$ and $\bm{c_{K}}=\Enc_{\pk}(\bm{K})$.
    
    Define polynomial-time algorithms $\B_{1}$ and $\B_{2}$ as follows.
    \begin{center}
        \fbox{
            \begin{minipage}{.9\columnwidth}
                $\B_{1}(\pk)$:
                \begin{enumerate}
                    \renewcommand{\labelenumi}{\arabic{enumi}.}
                    \item $(K_{0},K_{1},\sigma)\gets\A_{1}(\pk)$
                    \item $\bm{K}_{0}\gets[K_{0}\ \cdots\ K_{0}]^{\top}$, $\bm{K}_{1}\gets[K_{1}\ \cdots\ K_{1}]^{\top}$
                    \item Fix $\bm{y}\in\M^{\ell}$ to satisfy \eqref{eq:finv}
                    \item $\sigma\gets(\sigma,\bm{y})$
                    \item \textbf{return} $(\bm{K}_{0},\bm{K}_{1},\sigma)$
                \end{enumerate}
            \end{minipage}
        }
    \end{center}
    \begin{center}
        \fbox{
            \begin{minipage}{.9\columnwidth}
                $\B_{2}^{\mathcal{O}}(\sigma)$:
                \begin{enumerate}
                    \renewcommand{\labelenumi}{\arabic{enumi}.}
                    \item $(\sigma,\bm{y})\gets\sigma$
                    \item $\bm{c_{y}}\gets\Enc_{\pk}(\bm{y})$
                    \item $\bm{c_{u}}\gets\mathcal{O}(\bm{c_{y}})$
                    \item $\hat{\bm{c}}_{\bm{K}}\gets f^{-1}_{\Pi}(\bm{c_{y}};\bm{c_{u}})$, $c_{K}\gets\hat{\bm{c}}_{\bm{K},1}$
                    \item \textbf{return} $\A_{2}(c_{K},\sigma)$
                \end{enumerate}
            \end{minipage}
        }
    \end{center}
    Substituting the algorithms into $\Game^{\INDPEA}_{\Pi,f,\B}$ in \defref{def:ind-pea}, we obtain the following game.
    \begin{center}
        \fbox{
            \begin{minipage}{.9\columnwidth}
                $\Game^{1}_{\Pi,f,\A}$:
                \begin{enumerate}
                    \renewcommand{\labelenumi}{\arabic{enumi}.}
                    \item $(\pk,\sk)\gets\KeyGen(\lambda)$
                    \item $(K_{0},K_{1},\sigma)\gets\A_{1}(\pk)$
                    \item $\bm{K}_{0}\gets[K_{0}\ \cdots\ K_{0}]^{\top}$, $\bm{K}_{1}\gets[K_{1}\ \cdots\ K_{1}]^{\top}$
                    \item Fix $\bm{y}\in\M^{\ell}$ to satisfy \eqref{eq:finv}
                    \item $\sigma\gets(\sigma,\bm{y})$
                    \item $b\sample\{0,1\}$
                    \item $\bm{c_{K}}\gets\Enc_{\pk}(\bm{K}_{b})$
                    \item $(\sigma,\bm{y})\gets\sigma$
                    \item $\bm{c_{y}}\gets\Enc_{\pk}(\bm{y})$
                    \item $\bm{c_{u}}\gets\mathcal{O}(\bm{c_{y}})$
                    \item $\hat{\bm{c}}_{\bm{K}}\gets f^{-1}_{\Pi}(\bm{c_{y}};\bm{c_{u}})$, $c_{K}\gets\hat{\bm{c}}_{\bm{K},1}$
                    \item $\hat{b}\gets\A_{2}(c_{K},\sigma)$
                \end{enumerate}
            \end{minipage}
        }
    \end{center}
    \begin{itemize}
        \item\textbf{Setup}:
        This phase is identical to the setup phase in \defref{def:ind-pea}.
        \item\textbf{Challenge}:
        The attacker computes two different plaintexts $K_{0},K_{1}\in\M$, which are of identical size, and an intermediate state $\sigma$ using the polynomial-time algorithm $\A_{1}$ (Line~$2$).
        Subsequently, the attacker constructs vectors $\bm{K}_{0},\bm{K}_{1}\in\M^{r}$ from $K_{0},K_{1}$ and gives them to the challenger (Line~$3$).
        Additionally, the attacker fixes $\bm{y}\in\M^{\ell}$ to satisfy \eqref{eq:finv} and updates $\sigma$ to $(\sigma,\bm{y})$ (Lines~$4$--$5$).
        The challenger randomly selects a bit $b\in\{0,1\}$ and computes a ciphertext $\bm{c_{K}}$ of $\bm{K}_{b}$ using the encryption function $\Enc$ with $\pk$ (Lines~$6$--$7$).
        The challenger sets $\bm{c_{K}}$ to the encrypted controller.
        \item\textbf{Guess}:
        The attacker takes out the intermediate state and $\bm{y}$ from $\sigma$ and computes the ciphertext $\bm{c_{y}}$ of $\bm{y}$ (Lines~$8$--$9$).
        Subsequently, the attacker queries the encrypted control oracle $\mathcal{O}$ to obtain a controller output $\bm{c_{u}}$ and computes a ciphertext $\hat{\bm{c}}_{K}$ by using the encrypted controller of $f^{-1}$ with $\Pi$ (Line~$10$--$11$).
        The attacker guesses the parameter that has been encrypted from the first element $c_{K}$ of the ciphertext vector $\hat{\bm{c}}_{K}$ using the polynomial-time algorithm $\A_{2}$ and outputs a bit $\hat{b}\in\{0,1\}$ (Line~$11$--$12$).
        If $\hat{b}=b$, the attacker wins the game; otherwise, the challenger wins.
    \end{itemize}
    Define the advantage of the attacker in the game $\Game^{1}_{\Pi,f,\A}$ as $\Adv^{1}_{\Pi,f,\A}\coloneqq\left|\Pr\left[\hat{b}=b\mid\Game^{1}_{\Pi,f,\A}\right]-1/2\right|$.
    Furthermore, we modify lines $3$--$5$ and $7$--$11$ of $\Game^{1}_{\Pi,f,\A}$ to obtain the following game.
    \begin{center}
        \fbox{
            \begin{minipage}{.9\columnwidth}
                $\Game^{2}_{\Pi,\A}$:
                \begin{enumerate}
                    \renewcommand{\labelenumi}{\arabic{enumi}.}
                    \item $(\pk,\sk)\gets\KeyGen(\lambda)$
                    \item $(K_{0},K_{1},\sigma)\gets\A_{1}(\pk)$
                    \item $b\sample\{0,1\}$
                    \item $c_{K}\gets\Enc_{\pk}(K_{b})$
                    \item $\hat{b}\gets\A_{2}(c_{K},\sigma)$
                \end{enumerate}
            \end{minipage}
        }
    \end{center}
    \begin{itemize}
        \item\textbf{Setup}:
        This phase is identical to the setup phase in \defref{def:ind-pea}.
        \item\textbf{Challenge}:
        The attacker computes two different plaintexts $K_{0},K_{1}\in\M$, which are of identical size, and an intermediate state $\sigma$ using the polynomial-time algorithm $\A_{1}$ (Line~$2$).
        The attacker gives $K_{0}$ and $K_{1}$ to the challenger.
        The challenger randomly selects a bit $b\in\{0,1\}$ and computes a ciphertext $c_{K}$ of $K_{b}$ using the encryption function $\Enc$ with $\pk$ (Lines~$3$--$4$).
        The challenger gives $c_{K}$ to the attacker.
        \item\textbf{Guess}:
        This phase is identical to the guess phase in \defref{def:ind-cpa}.
    \end{itemize}
    Define the advantage of the attacker in the game $\Game^{2}_{\Pi,\A}$ as $\Adv^{2}_{\Pi,\A}\coloneqq\left|\Pr\left[\hat{b}=b\mid\Game^{2}_{\Pi,\A}\right]-1/2\right|$.
    Here, we consider the following lemmas to prove the theorem.
    
    \begin{lemma}\label{lem:3}
        $\left|\Adv^{\INDPEA}_{\Pi,f,\B}-\Adv^{1}_{\Pi,f,A}\right|$ is negligible.
    \end{lemma}

    \begin{proof}
        The substitution for obtaining $\Game^{1}_{\Pi,f,\A}$ does not change the probability of the game between the challenger and the attacker.
    \end{proof}

    \begin{lemma}\label{lem:4}
        $\left|\Adv^{1}_{\Pi,f,A}-\Adv^{2}_{\Pi,A}\right|$ is negligible.
    \end{lemma}
    
    \begin{proof}
        It follows from \eqref{eq:ecinv} that $\hat{\bm{c}}_{\bm{K}}=\bm{c_{K}}=\Enc_{\pk}(\bm{K}_{b})=[\Enc_{\pk}(K_{b})\ \cdots\ \Enc_{\pk}(K_{b})]^{\top}$.
        Therefore, we can rewrite line $11$ in $\Game^{1}_{\Pi,f,\A}$ as line $4$ in $\Game^{2}_{\Pi,\A}$ without changing the probability of the games.
        The lines $3$--$5$ and $7$--$10$ of $\Game^{1}_{\Pi,f,\A}$ do not affect the advantage of the attacker in the game, and so we can remove the lines.
    \end{proof}
    
    \begin{lemma}\label{lem:5}
        $\left|\Adv^{2}_{\Pi,A}-\Adv^{\INDCPA}_{\Pi,\A}\right|$ is negligible.
    \end{lemma}
    
    \begin{proof}
        By definition, the game $\Game^{2}_{\Pi,A}$ is the same as the game $\Game^{\INDCPA}_{\Pi,A}$.
    \end{proof}
    
    It follows from \lemref{lem:3} to \lemref{lem:5} that the difference of advantages, $\left|\Adv^{\INDPEA}_{\Pi,f,\B}-\Adv^{\INDCPA}_{\Pi,\A}\right|$, is negligible.
    Furthermore, $\Adv^{\INDCPA}_{\Pi,\A}$ is not negligible since $\Pi$ is not IND-CPA secure based on the assumption of this proof.
    Therefore, $\Adv^{\INDPEA}_{\Pi,f,\B}$ is not negligible, and so $f_{\Pi}$ is not IND-PEA secure.
\end{proof}

Note that few controllers can satisfy the assumption in \thmref{thm:equality} in practice.
For example, in a static output feedback controller $\bm{u}=\bm{F}\bm{y}$, where $\bm{F}\in\M^{q\times\ell}$ is a feedback gain, the existence of inverse mapping $f^{-1}$ in \thmref{thm:equality} is equivalent to the condition that $q=r$, namely $\ell=1$.
Such a controller is a particular case of this control, and hence, it seems that the IND-PEA security is weaker than the IND-CPA security for most of the controllers formulated by \eqref{eq:control}.
Although one may think that this is a negative result, it suggests that homomorphic encryption satisfying IND-CPA security may not necessarily be required to prevent the parameter estimation attack.
In other words, the IND-PEA security might be achieved using more lightweight homomorphic encryption schemes than the conventional schemes to reduce the computation costs.
This implication is significant because encrypted control generally increases computational costs, and control systems require real-time computation.

\section{Further Development}\label{sec:further_development}

The security definition and analysis presented in this study were based on the assumption that the control law \eqref{eq:control} was employed.
Although the control law represents all static controllers with a controller parameter $\bm{K}$, there is scope for further expansion.
The results of this study can be expanded upon using a dynamical controller $\bm{x}_{t+1}=f(\bm{x}_{t},\bm{y}_{t};\bm{K})$, $\bm{u}_{t}=g(\bm{x}_{t},\bm{y}_{t};\bm{K})$, a time-varying controller $\bm{u}_{t}=f(\bm{y}_{t};\bm{K}_{t})$ instead of \eqref{eq:control}, where $\bm{x}$ is a controller state, and $t$ is a time step.
According to this change, the encrypted control oracle must be modified.

For example, the modified oracle for the dynamical controller must include a state updated by each query of an attacker and return an output based on the current state and input.
Note that the oracle state is hidden against the attacker.
The security of dynamical controllers under a parameter estimation attack requires to be analyzed while considering the state trajectory of the controller because the controller output depends on both the controller state and the parameter.
Such analysis is challenging because it needs to consider the effect of fundamental properties of dynamical systems, such as stability, controllability, and observability, on the provable security for encrypted control.
Moreover, an attacker may aim to disclose the (initial) state of an encrypted dynamical controller.
In this case, the security is defined through a game that differs from the IND-PEA game.

Furthermore, the security of controllers formulated by stochastic processes such as a Markov decision process can also be considered.
An example of this type of controller is reinforcement learning, which is an application of encrypted control~\cite{Suh21_2}.
In this case, the state of the environment randomly moves in each time step, and an action, which is an input for the environment, is determined by a policy that maximizes the expected cumulative reward.
An attacker for an agent of reinforcement learning may be interested in the policy rather than the controller parameter.
Further research on provably secure encrypted control involves considering such stochastic controllers.

\section{Conclusions}\label{sec:conclusions}

This study defined a novel security for encrypted control under a parameter estimation attack in terms of provable security in cryptography and analyzed its security strength.
The definition enables us to prove the security for encrypted control by mathematical procedures.
The analysis revealed that encrypted control is secure if homomorphic encryption used for the control is secure.
This result means that most existing encrypted controls are secure under a parameter estimation attack.
The proposed security can be extended in the future to encrypted control using dynamical controllers.

\bibliographystyle{IEEEtran}
\bibliography{encrypted_control_and_optimization,others}

\end{document}